\newcommand{\calA}{{\cal A}}
\newcommand{\calF}{{\cal F}}
\newcommand{\barell}{{\bar\ell}}
\newcommand{\etal}{{\em et al.}}
\newcommand{\fourthirds}{{\mbox{$\frac{4}{3}$}}}
\newcommand{\fivethirds}{{\mbox{$\frac{5}{3}$}}}
\newcommand{\fourteennineth}{{\mbox{$\frac{14}{9}$}}}
\newcommand{\braced}[1]{{ \left\{ #1 \right\} }}
\newcommand{\bigbrace}[1]{{ \big\{ #1 \big\} }}
\newcommand{\Bigbars}[1]{{ \Big| #1 \Big| }}
\newcommand{\floor}[1]{{ \lfloor #1 \rfloor }}
\newcommand{\opt}{{\textit{opt}}}
\newcommand{\NP}{{\textsf{NP}}}
\newcommand{\mycase}[1]{\mbox{{\underline{Case #1}}:\/}}
\newtheorem{theorem}{Theorem}[section]
\newtheorem{lemma}[theorem]{Lemma}
\newenvironment{bigeqn*}{\large\begin{eqnarray*}}{\end{eqnarray*}}
\newcommand{\ignore}[1]{}
\title{Better Bounds for Incremental Frequency Allocation\\
in Bipartite Graphs}
\author{%
       Marek Chrobak\thanks{%
       Department of Computer Science,
       University of California,
       Riverside, CA 92521, USA.
       Research supported by NSF Grant CCF-0729071.
       }
       \and
	{\L}ukasz Je{\.z}\thanks{%
		Institute of Computer Science,
		University of Wroc{\l}aw,
		ul. Joliot-Curie 15, PL-50-383 Wroc{\l}aw, Poland.
		Email: {\tt lje@cs.uni.wroc.pl}.
		Research supported by MNiSW Grant N N206 490638 2010--2011.
	}
       \and
		Ji\v{r}\'\i\ Sgall\thanks{%
Dept.\ of Applied Mathematics, Faculty of Mathematics and Physics,
Charles University, Malostransk\'e n\'am.\ 25, CZ-11800 Praha 1, Czech
Republic. 
Email: {\tt sgall@kam.mff.cuni.cz}.
Partially supported by Inst.\ for Theor. Comp. Sci.,
Prague (project 1M0545 of M\v{S}MT \v{C}R)
and grant IAA100190902 of GA AV \v{C}R.
		}
   }
\begin{document}

\maketitle

\begin{abstract}
We study frequency allocation in wireless networks.
A wireless network is modeled by an undirected graph, with vertices
corresponding to cells. In each vertex we have a certain number of
requests, and each of those requests must be assigned a different frequency.
Edges represent conflicts between cells, meaning that frequencies in
adjacent vertices must be different as well. The objective is to
minimize the total number of used frequencies.

The offline version of the problem is known to be {\NP}-hard. In
the incremental version, requests for frequencies arrive over time and 
the algorithm is required to assign a frequency to a request
as soon as it arrives. Competitive incremental algorithms have
been studied for several classes of graphs. For paths,
the optimal (asymptotic) ratio is known to be $4/3$,
while for hexagonal-cell graphs it is between $1.5$ and $1.9126$. 
For $\xi$-colorable graphs, the ratio of $(\xi+1)/2$ can be achieved. 

In this paper, we prove nearly tight bounds on the asymptotic
competitive ratio for bipartite graphs, showing that it is between $1.428$ 
and $1.433$. This improves the previous lower bound of $4/3$ and upper
bound of $1.5$.
Our proofs are based on reducing the incremental problem to a purely 
combinatorial (equivalent) problem of 
constructing set families with certain intersection properties.
\end{abstract}

%%%%%%%%%%%%%%%%%%%%%%%%%%%%%%%%%%%%%%%%%%%%%%%%%%%%%%%%%%%%%%%%%%%%%%
%%%%%%%%%%%%%%%%%%%%%%%%%%%%%%%%%%%%%%%%%%%%%%%%%%%%%%%%%%%%%%%%%%%%%%
%%%%%%%%%%%%%%%%%%%%%%%%%%%%%%%%%%%%%%%%%%%%%%%%%%%%%%%%%%%%%%%%%%%%%%

\section{Introduction}
\label{sec: introduction}

\paragraph{Static frequency allocation.}
In the frequency allocation problem, we are given a wireless network
and a collection of requests for frequencies. The network is 
modeled by a (possibly infinite) undirected graph $G$, whose vertices
correspond to the network's cells. Each request is associated with a vertex, 
and requests in the same vertex must be assigned different frequencies.
Edges represent conflicts between cells, meaning that frequencies in
adjacent vertices must be different as well. The objective is to
minimize the total number of used frequencies.
We will refer to this model as \emph{static}, as it
corresponds to the scenario where the set of
requests in each vertex does not change over time.

A more rigorous formulation of this static frequency allocation problem is
as follows: Denote by $\ell_v$ the \emph{load} at a vertex $v$ of $G$, that is
the number of frequency requests at $v$. A frequency allocation is
a function that assigns a set $L_v$ of frequencies (represented, say, by
positive integers) to each vertex $v$
and satisfies the following two conditions: 
(i) $|L_v| = \ell_v$ for each vertex $v$, and
(ii) $L_v\cap L_w = \emptyset$ for each edge $(v,w)$.
The total number of frequencies used is $|\bigcup_{v\in G} L_v|$, and this
is the quantity we wish to minimize. We will use notation
$\opt(G,\barell)$ to denote the minimum number of frequencies 
for a graph $G$ and a demand vector $\barell$.

If one request is issued per node, then $\opt(G,\barell)$ is equal to the
chromatic number of $G$, which immediately implies that the frequency
allocation problem is {\NP}-hard. In fact,
McDiarmid and Reed \cite{McDRee00} show that the problem remains {\NP}-hard
for the graph representing the network whose cells are regular
hexagons in the plane, which is a commonly studied abstraction of
wireless networks. (See, for example,
the surveys in~\cite{Murthy-etal99,Aardal-etal03}).
Polynomial-time $\fourthirds$-approximation algorithms for this
case appeared in \cite{McDRee00} and \cite{NarShe01}.

%%%%%%%%%%%%

\paragraph{Incremental frequency allocation.}
In the incremental version of frequency allocation, requests arrive over time and 
an incremental algorithm is required to assign frequencies to requests
as soon as they arrive. An incremental algorithm $\calA$ is called
\emph{asymptotically $R$-competitive} if, for any graph $G$ and load vector $\barell$,
the total number of frequencies 
used by $\calA$ is at most $R\cdot \opt(G,\barell)+ \lambda$, where $\lambda$ is a constant
independent of $\barell$. We allow $\lambda$ to depend on the class of graphs under 
consideration, in which case we say that $\calA$ is $R$-competitive for this class.
We refer to $R$ as the \emph{asymptotic competitive ratio} of $\calA$. As in this paper
we focus only on the asymptotic ratio, we will skip the word
``asymptotic" (unless ambiguity can arise), and simply use
terms ``$R$-competitive" and ``competitive ratio" instead. Following
the terminology in the literature (see \cite{ChChYZ07,ChChYZ10}, for example),
we will say that the 
competitive ratio is \emph{absolute} when the additive constant
$\lambda$ is equal $0$.

Naturally, research in this area is concerned with designing
algorithms with small competitive ratios for various classes of graphs, 
as well as proving lower bounds. 
For hexagonal-cells graphs, Chan~{\etal}~\cite{ChChYZ07,ChChYZ10} 
give an incremental algorithm with competitive ratio $1.9216$ and
prove that no ratio better than $1.5$ is possible. A lower bound
of $4/3$ for paths was given in \cite{ChCYZZ06}, and later
Chrobak and Sgall \cite{ChrSga10} gave an incremental algorithm with
the same ratio. Paths are in fact the only non-trivial graphs for which tight
asymptotic ratios are known. As pointed out earlier, there is a strong
connection between frequency allocation and graph coloring, so one would
expect that the competitive ratio can be bounded in terms of the chromatic
number. Indeed, for $\xi$-colorable graphs Chan~{\etal}~\cite{ChChYZ07,ChChYZ10}
give an incremental algorithm with competitive ratio of $(\xi+1)/2$.
(This ratio is in fact absolute.)
On the other hand, the best known lower bounds on the competitive ratio,
$1.5$ in the asymptotic and $2$ in the absolute case~\cite{ChChYZ07,ChChYZ10},
hold for hexagonal-cell graphs, but no stronger bounds are known for graphs
of higher chromatic number.

%%%%%%%%%%%%%%%%%%%%%

\paragraph{Our contribution.}
In this paper, we prove nearly tight bounds on the optimal competitive ratio of
incremental algorithms for bipartite graphs, showing that it is between 
$10/7\approx 1.428$ and $(18-\sqrt{5})/11\approx 1.433$.
This improves the lower and upper bounds for this version of frequency allocation.
The best previously known lower bound was $4/3$, which
holds in fact even for paths \cite{ChCYZZ06,ChrSga10}. The best upper
bound of $1.5$ was shown in \cite{ChChYZ07,ChChYZ10} and it holds even
in the absolute case. 

Our proofs are based on reducing the incremental problem to a purely 
combinatorial (equivalent) problem of 
constructing set families, which we call F-systems, with certain intersection properties.
A rather surprising consequence of this reduction is
that the optimal competitive ratio can be achieved
by an algorithm that is topology-independent; it assigns
a frequency to each vertex $v$ based only on the current optimum value,
the number of requests to $v$, and the partition of the vertex $v$;
that is, independently of the actual frequencies already
assigned to the neighbors of $v$.

To achieve a competitive ratio below $2$ for bipartite graphs, we need to use
frequencies that are shared between the two partitions of the graph. The
challenge is then to assign these shared frequencies to the requests in
different partitions so as to avoid collisions -- in essence, to break
the symmetry. In our construction, we develop a symmetry-breaking method
based on the concepts of ``collisions with the past'' and ``collisions with the
future'', which allows us to derive frequency sets in a systematic fashion.
We believe that these two ideas -- the concept of
F-systems and our symmetry-breaking method -- can be extended to frequency
assignment problems in other types of graphs.

%%%%%%%%%%%%%%%%%%%%%

\paragraph{Other related work.}
Determining optimal absolute ratios
is usually easier than for asymptotic ratios and it has been accomplished for
various classes of graphs, including paths~\cite{ChCYZZ06} and bipartite graphs
in general~\cite{ChChYZ07,ChChYZ10}, and hexagonal-cell graphs and $3$-colorable
graphs in general~\cite{ChChYZ07,ChChYZ10}.
The asymptotic ratio model, however, is more relevant to practical scenarios where the
number of frequencies is typically very large, so the additive constant
can be neglected.

In the \emph{dynamic version} of frequency allocation each request
has an arrival and departure time. At each time, any two requests
that have already arrived but not departed and are in the same
or adjacent nodes must be assigned different frequencies. As before,
we wish to minimize the total number of used frequencies.
As shown by Chrobak and Sgall \cite{ChrSga10}, this dynamic
version is {\NP}-hard even for the special case when the input graph is a path. 

It is natural to study the online version of this problem, where
we introduce the notion of ``time" that progresses in discrete steps, and
at each time step some requests may arrive and some previously arrived
requests may depart. This corresponds to real-life wireless networks where
customers enter and leave a network's cells over time, in an unpredictable
fashion. An online algorithm needs to assign frequencies to requests as 
soon as they arrive. The competitive ratio is defined 
analogously to the incremental case. (The incremental static
version can be thought of as
a special case in which all departure times are infinite.)
This model has been well studied in the context of job scheduling, where it is sometimes
referred to as time-online. Very little is known about this online dynamic
case. Even for paths the optimal ratio is not known; it is only known
that it is between $\fourteennineth\approx 1.571$ \cite{ChrSga10}
and $\fivethirds\approx 1.667$ \cite{ChCYZZ06}.

%%%%%%%%%%%%%%%%%%%%%%%%%%%%%%%%%%%%%%%%%%%%%%%%%%%%%%%%%%%%%%%%%%%%%%%%%%%%%%
%%%%%%%%%%%%%%%%%%%%%%%%%%%%%%%%%%%%%%%%%%%%%%%%%%%%%%%%%%%%%%%%%%%%%%%%%%%%%%
%%%%%%%%%%%%%%%%%%%%%%%%%%%%%%%%%%%%%%%%%%%%%%%%%%%%%%%%%%%%%%%%%%%%%%%%%%%%%%

\section{Preliminaries}

For concreteness, we will assume that frequencies are identified by
positive integers, although it does not really matter.
Recall that we use the number of frequencies as the performance
measure. In some literature~\cite{ChCYZZ06,ChZhZh07,ChChYZ10},
authors used the maximum-numbered frequency instead. It is not hard to show
(see \cite{ChrSga10}, for example, which does however involve a transformation
of the algorithm that makes it \emph{not} topology independent)
that these two approaches are equivalent.    

For a bipartite graph $G = (A,B,E)$, it is easy to characterize the
optimum value. As observed in~\cite{ChCYZZ06,ChrSga10}, in this case
the optimum number of frequencies is
\begin{eqnarray}
		\opt(G,\barell) &=& \max_{(u,v)\in E} \braced{ \ell_u + \ell_v }.
			\label{eqn: static optimum}
\end{eqnarray}
For completeness, we include a simple proof: Trivially,
$\opt(G,\barell) \ge \ell_u+\ell_v$ for each edge $(u,v)$. 
On the other hand, denoting by $\omega$ the right-hand side of
(\ref{eqn: static optimum}), we can assign frequencies to
nodes as follows: for $u\in A$, assign to $u$ frequencies
$1,2,\ldots,\ell_u$, and for $u\in B$ assign to $u$ frequencies
$\omega-\ell_u+1,\omega-\ell_u+2,\ldots,\omega$. This way each vertex $u$
is assigned $\ell_u$ frequencies and no two adjacent nodes
share the same frequency.

Throughout the paper, we will use the convention that
if $c\in\braced{A,B}$, then 
$c'$ denotes the partition other than $c$, that is 
$\braced{c,c'} = \braced{A,B}$.

%%%%%%%%%%%%%%%%%%%%%%%%%%%%%%%%%%%%%%%%%%%%%%%%%%%%%%%%%%%%%%%%%%%%%%%%%%%%%%
%%%%%%%%%%%%%%%%%%%%%%%%%%%%%%%%%%%%%%%%%%%%%%%%%%%%%%%%%%%%%%%%%%%%%%%%%%%%%%
%%%%%%%%%%%%%%%%%%%%%%%%%%%%%%%%%%%%%%%%%%%%%%%%%%%%%%%%%%%%%%%%%%%%%%%%%%%%%%

\section{Competitive F-Systems}

In this section we show that finding an $R$-competitive algorithm
for bipartite graphs can be reduced to an equivalent problem of
constructing certain families of sets that we call F-systems.

Suppose that for any $c\in\braced{A,B}$ and any integers
$t,k$ such that $0 < k \le t$, we are given a set
$F^c_{t,k}$ of positive integers (frequencies).
Denote by $\calF = \bigbrace{F^c_{t,k}}$ the family of those sets.
Then $\calF$ is called an \emph{$F$-system} if
\begin{description}
	\item{(F1)} $|F^c_{t,k}| \ge k$ for all $c,t,k$, and
	\item{(F2)} $F^A_{t,k} \cap F^B_{t',k'} = \emptyset$ 
	 	for all	$k,k',t,t'$ such that $k+k'\le \max(t,t')$.
\end{description}
An F-system is called \emph{$R$-competitive} if
for all $t$ we have
\begin{eqnarray}
	\Bigbars{ \bigcup_{c=A,B}\bigcup_{\kappa \le \tau\le t} F^c_{\tau,\kappa} }
			&\le& R\cdot t + \lambda,
			\label{eqn: def rho-competitive}
\end{eqnarray}
where $\lambda$ is a constant independent of $t$.
The \emph{competititive ratio} of $\calF$ is the smallest $R$ for
which $\calF$ is $R$-competitive.

%%%%%%%%%%%%%%%%%%%%%%

\begin{lemma}\label{lem: incr alg equiv f-system}
For any $R\ge 1$, there is an $R$-competitive incremental algorithm
for frequency allocation in bipartite graphs if and only if there
is an $R$-competitive F-system.
\end{lemma}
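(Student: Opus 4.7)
My plan is to prove the two implications separately. For the ``if'' direction I will use the F-system as a topology-independent lookup table; for the ``only if'' direction I will extract an F-system by running the given algorithm on a carefully designed bipartite instance.

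Given an $R$-competitive F-system $\calF = \bigbrace{F^c_{t,k}}$, I define a topology-independent incremental algorithm as follows: when the $k$-th request arrives at a vertex $v$ in partition $c$, compute the global optimum $t$ after this request and assign the smallest frequency in $F^c_{t,k}$ not yet used at $v$. Since $|F^c_{t,k}| \ge k$ by (F1) and $v$ carries only $k-1$ previous frequencies, such a choice exists and $v$'s frequencies are pairwise distinct. For conflicts across an edge $(v,w)$ with $v\in A$, $w\in B$: if $v$'s $k_1$-th frequency was chosen at the moment the optimum became $t_1$ and $w$'s $k_2$-th at the moment the optimum became $t_2$, then at the later of these two moments both vertices already carry loads $k_1$ and $k_2$, so the optimum is at least $k_1+k_2$ there, forcing $\max(t_1,t_2)\ge k_1+k_2$; (F2) then gives $F^A_{t_1,k_1}\cap F^B_{t_2,k_2}=\emptyset$, ruling out any collision. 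The competitive ratio transfers because every frequency used lies in $\bigcup_{c,\kappa\le\tau\le T} F^c_{\tau,\kappa}$ for $T$ the final optimum, a set of size at most $R\cdot T+\lambda$ by~(\ref{eqn: def rho-competitive}).

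Conversely, given an $R$-competitive algorithm $\algA$, I construct a single bipartite graph $G^*=(A^*,B^*,E^*)$ with $A^*=\bigbrace{v^{A,t,k}:1\le k\le t}$, $B^*$ analogous, and an edge between $v^{A,t,k}$ and $v^{B,t',k'}$ exactly when $k+k'\le\max(t,t')$ --- precisely the pattern forbidden by (F2). Requests are fed to $\algA$ in phases: in phase $t$, for each $k=1,2,\ldots,t$ I place $k$ consecutive requests at $v^{A,t,k}$, and then do the same for $v^{B,t,k}$. I set $F^c_{t,k}$ to be the $k$ distinct frequencies $\algA$ assigns at $v^{c,t,k}$, so (F1) is immediate; (F2) follows because whenever $k+k'\le\max(t,t')$ the vertices $v^{A,t,k}$ and $v^{B,t',k'}$ are adjacent in $G^*$, forcing $\algA$ to use disjoint frequencies at them. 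A quick verification (using the adjacency rule) shows that every edge among vertices loaded by the end of phase $t$ has load at most $t$, with equality realized for instance by $(v^{A,t,k},v^{B,t,t-k})$; hence $\opt=t$ after phase $t$, and the $R$-competitiveness of $\algA$ yields $\Bigbars{\bigcup_{c,\kappa\le\tau\le t} F^c_{\tau,\kappa}}\le R\cdot t+\lambda$ as required.

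The main obstacle is the combinatorial bookkeeping in the $(\Rightarrow)$ direction: the adjacency rule on $G^*$ and the order of arrivals must be tuned so that during each phase $t$ the instantaneous optimum equals exactly $t$ --- not less (so that each $F^c_{t,k}$ is extracted under the intended value of $t$) and not more (so that the $R$-competitiveness of $\algA$ bounds the extracted union crisply, absorbing the $t=1$ boundary case into $\lambda$). The rule $k+k'\le\max(t,t')$ is precisely what simultaneously aligns the adjacencies of $G^*$ with the forbidden intersections of (F2) and keeps every realized edge-load bounded by the current phase index.
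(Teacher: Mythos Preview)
Your proof is correct and follows essentially the same approach as the paper's: the same topology-independent algorithm in the $(\Leftarrow)$ direction, and the same ``universal'' bipartite graph with adjacency rule $k+k'\le\max(t,t')$ in the $(\Rightarrow)$ direction. One small remark: your concern about the instantaneous optimum during a phase is unnecessary --- the index $t$ in $F^c_{t,k}$ is just the phase label of the vertex $v^{c,t,k}$, not the optimum at the moment of assignment, so all that matters is that the optimum equals $t$ \emph{after} phase $t$, which you verify.
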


\begin{proof}
$(\Rightarrow)$
Let $\calA$ be an $R$-competitive incremental algorithm.
To prove this implication, we define a ``universal" infinite bipartite
graph $H = (A,B,E)$ and we will issue requests to this graph.
For $c \in\braced{A,B}$, the vertices in $c$ have the form $(t,k)_c$,
where $k \le t$. Two vertices $(t,k)_A$ and $(t',k')_B$ are
connected by an edge if $k+k' \le \max(t,t')$. 

The requests are issued in phases numbered $t=1,2,\ldots$. In phase $t$,
for each node $(t,k)_c$, we issue $k$ requests to this node.
Let $F^c_{t,k}$ be the set of frequencies that $\calA$
assigns to $(t,k)_c$. After phase $t$, by the definition of
$H$ and by (\ref{eqn: static optimum}),
the optimum number of frequencies is $t$, so $\calA$
uses at most $Rt+\lambda$ frequencies, for some $\lambda$. 
In other words, (\ref{eqn: def rho-competitive}) holds.
Thus $\calF = \bigbrace{F^c_{t,k}}$ is an $R$-competitive F-system.

$(\Leftarrow)$
Let $\calF$ be an $R$-competitive F-system. We use $\calF$ to define
an incremental algorithm $\calA$ that works as follows. 
Let $G = (A,B,E)$ be the given bipartite graph. Consider one step
of the computation in which a new request arrives at a
vertex $u \in c$, where $c\in\braced{A,B}$. Denote by $t$
the current optimum number of frequencies, that is
$t = \max_{(v,w)\in E}(\ell_v + \ell_w)$. Choose any frequency
$f\in F^c_{t,k}$, for $k = \ell_u$, that is not yet used on $u$
and assign $f$ to this request. Such $f$ exists, because by
property (F1) we have
$|F^c_{t,k}|\ge k$ and the number of frequencies assigned so far 
to $u$ is $k-1$. 

Trivially, all frequencies assigned by $\calA$ to one node are different.
We claim that adjacent nodes will be assigned
different frequencies as well. Consider again a step where a frequency
$f$ is assigned to a $k$th request to a vertex $u$, 
when the optimum value is $t$, as described above. So $k = \ell_u$.
Without loss of generality, assume $u\in A$. For an edge $(u,v)\in E$,
let $k' = \ell_v$ be the current load at $v$. If
$g$ is any frequency assigned by $\calA$ to $v$ then, by
the definition of $\calA$, we have that
$g\in F^B_{t',k''}$ for some $t'\le t$ and $k''\le \min(t', k')$.
Thus $k+k'' \le k+k' \le t$, by the definition of $t$.
Using condition (F2), we now get that
$F^A_{t,k}\cap F^B_{t',k''} = \emptyset$, and therefore $f\neq g$.

Finally, when the optimum value is $t$, then any frequency
used by $\calA$ is from some set $F^c_{\tau,\kappa}$ for
$\kappa\le\tau\le t$. Therefore $\calA$ is $R$-competitive, by the
property (\ref{eqn: def rho-competitive}) of $\calF$.
\end{proof}

%%%%%%%%%%%%%%%%%%%%%%%%%%%%%%%%%%%%%%%%%%%%%%%%%%%%%%%%%%%%%%%%%%%%%%
%%%%%%%%%%%%%%%%%%%%%%%%%%%%%%%%%%%%%%%%%%%%%%%%%%%%%%%%%%%%%%%%%%%%%%
%%%%%%%%%%%%%%%%%%%%%%%%%%%%%%%%%%%%%%%%%%%%%%%%%%%%%%%%%%%%%%%%%%%%%%

\section{An Upper Bound}
\label{sec: an upper bound}

In this section we prove that there is an $R_0$-competitive
incremental algorithm, for $R_0 = (18-\sqrt{5})/11\approx 1.433$.
Using Lemma~\ref{lem: incr alg equiv f-system}, it is sufficient to
design an $R_0$-competitive F-system.

%%%%%%%%%%%%%%%%%

\paragraph{Intuitions.}
Our construction below may appear rather mysterious, so we begin by
gradually introducing its main ideas.
We will distinguish between two types of frequencies: private and
shared. A-private frequencies will be used only in sets $F^A_{t,k}$,
B-private frequencies will be used only in sets $F^B_{t,k}$,
while shared frequencies can be used in some sets from both partitions
$A$ and $B$.

Competitive ratio $2$ can be easily achieved using only private frequencies. For
each $c\in \braced{A,B}$,
let $P^c$ denote an infinite pool of $c$-private frequencies, 
with $P^A$ and $P^B$ disjoint. We
simply let $F^c_{t,k}$ consist of the first $k$ frequencies from $P^c$.
Conditions (F1) and (F2) are trivially true.
For any given $t$, the set on the left-hand side of 
inequality (\ref{eqn: def rho-competitive}) contains $2t$ frequencies,
so (\ref{eqn: def rho-competitive}) holds for $R = 2$.

We now show how to improve the ratio to $1.5$. To accomplish this,
we must use some shared frequencies.
Let $S$ denote an infinite pool of shared frequencies, where $S$ is
disjoint with $P^A\cup P^B$.
To avoid collisions (that is, violations of (F2)), we need
to use these shared frequencies judiciously. The main idea is this: for
any given $c,t,k$, $F^c_{t,k}$ will only contain some of the first
$t/2$ $c$-private frequencies and some of the first $t/2$ shared
frequencies. (For simplicity, we temporarily
ignore the fact that $t/2$ may not be integer.) 
This will guarantee that we will use at most $1.5t$
frequencies for all sets $F^c_{\tau,\kappa}$ with $\tau\le t$.
If $k\le t/2$, then we have enough $c$-private frequencies to completely fill
$F^c_{t,k}$. Otherwise, for $k > t/2$,
in addition to the first $t/2$ $c$-private frequencies,
$F^c_{t,k}$ we use $k-t/2$ \emph{last} shared
frequencies with indices at most $t/2$. So these frequencies will be
indexed between $t/2 - (k-t/2) = t-k$ and $t/2$. Clearly,
$F^c_{t,k}$ has at least $k$ frequencies, so (F1) holds.
The intuition behind (F2) is this: Suppose $t'\le t$. Then
$F^A_{t,k}$ conflicts with each $F^B_{t',k'}$ for $k' \le t-k$.
As $k'\le t-k$, the ``worst" such conflict is with $F^B_{t-k,t-k}$, which 
is disjoint with $F^A_{t,k}$, by our choice of shared frequencies.

To make it more precise, for any real number $x\ge 0$ let
\begin{eqnarray*}
	S_x &=& \textrm{the first $\floor{x}$ frequencies in $S$ } , \\
	P^c_x &=& \textrm{the first $\floor{x}$ frequencies in $P^c$, for $c\in\braced{A,B}$}.
\end{eqnarray*}
We now let $\calF = \bigbrace{F^c_{t,k}}$, where for $c\in\braced{A,B}$ and $k\le t$ we
have
\begin{eqnarray*}
	F^c_{t,k} &=& P^c_{t/2+1} \cup (S_{t/2} \setminus S_{t-k} ).
\end{eqnarray*}
We claim that $\calF$ is a $1.5$-competitive F-system.
If $k\le \floor{t/2}+1$, then $|F^c_{t,k}|\ge k$ is trivial.
If $k\ge \floor{t/2}+2$, then $t- k \le t - \floor{t/2}- 2 \le  t/2$,
so $S_{t-k}\subseteq S_{t/2}$ and thus
$|F^c_{t,k}| \ge \floor{t/2} + 1 + ( \floor{t/2} - \floor{t - k} )
	\ge k$. So (F1) holds. 

To verify (F2),
pick any two pairs $k\le t$ and $k'\le t'$ with $k+k'\le\max(t,t')$. 
Without loss of generality, assume $t'\le t$ and $c = A$. 
If $k'\le \floor{t'/2}+1$, then $F^B_{t',k'} \subseteq P^B$, so
(F2) is trivial. If $k'\ge \floor{t'/2}+2$, then $t'/2\le k' \le t-k$, so
$F^B_{t',k'} \subseteq P^B \cup S_{t'/2}
			\subseteq P^B \cup S_{t-k}$,
which implies (F2) as well.

Finally, for any $c\in\braced{A,B}$ and $\kappa\le\tau\le t$, we have
$F^c_{\tau,\kappa} \subseteq P^A_{t/2+1}\cup P^B_{t/2+1} \cup S_{t/2}$,
so the inequality (\ref{eqn: def rho-competitive}) holds with 
$R = 1.5$ and $\lambda = 2$. We can thus conclude that this
$\calF$ is $1.5$-competitive.

\medskip

A geometric interpretation of the used sets of frequencies is is shown
in Figure~\ref{fig: shared frequencies 1.5}.  For $k > t/2$, set
$F^c_{t,k}$ conflicts with $F^{c'}_{\tau,\tau}$ for $\tau = t-k$ and
$F^{c'}_{\tau,\tau}$ uses shared frequencies numbered at most $\tau'/2
= (t-k)/2$. Thus all shared frequencies that ``conflict with the past"
are within the region below the line $x=(t-k)/2$. This region is
disjoint with the shaded region assigned to $F^{c'}_{t,k}$, whose
boundary is the line $x=t-k$.

%%%%%%%%%%%%%%%%

\begin{figure}[t]
\begin{center}
\includegraphics[width=4in]{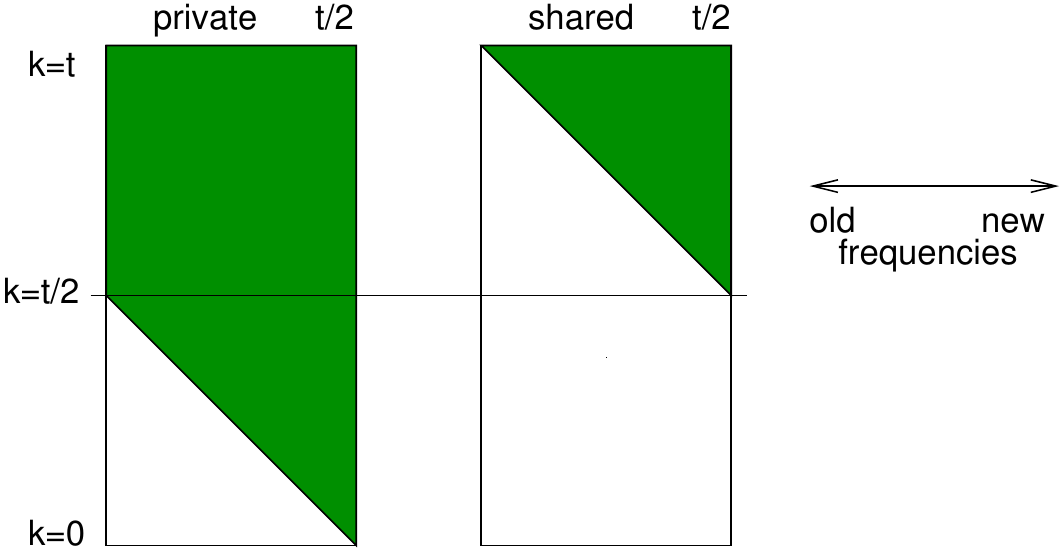}

\caption{The structure of frequency sets in the $1.5$-competitive
algorithm.  In this figure, we fix the value of $t$, and show the
frequency sets for each value of $k\leq t$.  The horizontal axis
represents frequencies, with the first frequencies drawn on the left.
The vertical axis represents the values of $k$, for each $k$ the
intersection of the corresponding horizontal line with the shaded
(green) regions shows the frequencies used by the algorithm.  For
private frequencies on the left, for $k<t/2$, we do not need to use
all of the frequencies, and the choice of them is arbitrary.  For
shared frequencies on the right, the shaded area corresponds exactly
to $S_{t/2} \setminus S_{t-k}$.  }
\label{fig: shared frequencies 1.5}
\end{center}
\end{figure}

%%%%%%%%%%%%%%%%%

\paragraph{Construction of an $R_0$-competitive F-system.}

To improve the ratio further, the idea is to use even fewer private
frequencies, but to assign shared frequencies more carefully. We will
actually have three types of shared frequencies, called A-shared,
B-shared and symmetric-shared.  

To achieve ratio smaller than $1.5$ we need to use some shared
frequencies even for $k<t/2$. Obviously, to do this we must break
symmetry, as $F^A_{t,k}$ and $F^B_{t,k}$ cannot use any common shared
frequency for $k<t/2$. This is the reason why we introduce A-shared
and B-shared frequencies.  For sets $F^c_{t,k}$, as $k$ increases, we
first use $c$-private frequencies, then $c$-shared frequencies, then
symmetric-shared frequencies, and finally, if $k$ gets sufficiently
large, we also ``borrow" $c'$-shared frequencies to include in this
set. More precisely, we use some $c$-shared frequencies for any
$k>t/\phi^2\approx 0.382 t$, while we use symmetric-shared frequencies
for $k > t/2$ and $c'$-shared frequencies only for $k>t/\phi\approx
0.618 t$.  We remark here that symmetric frequencies are still
needed. If we restrict ourselves to only private and $c$-shared
frequencies then the best ratio we are able to achieve is $\approx
1.447$.

\begin{figure}[t]
\begin{center}
\includegraphics[width=4in]{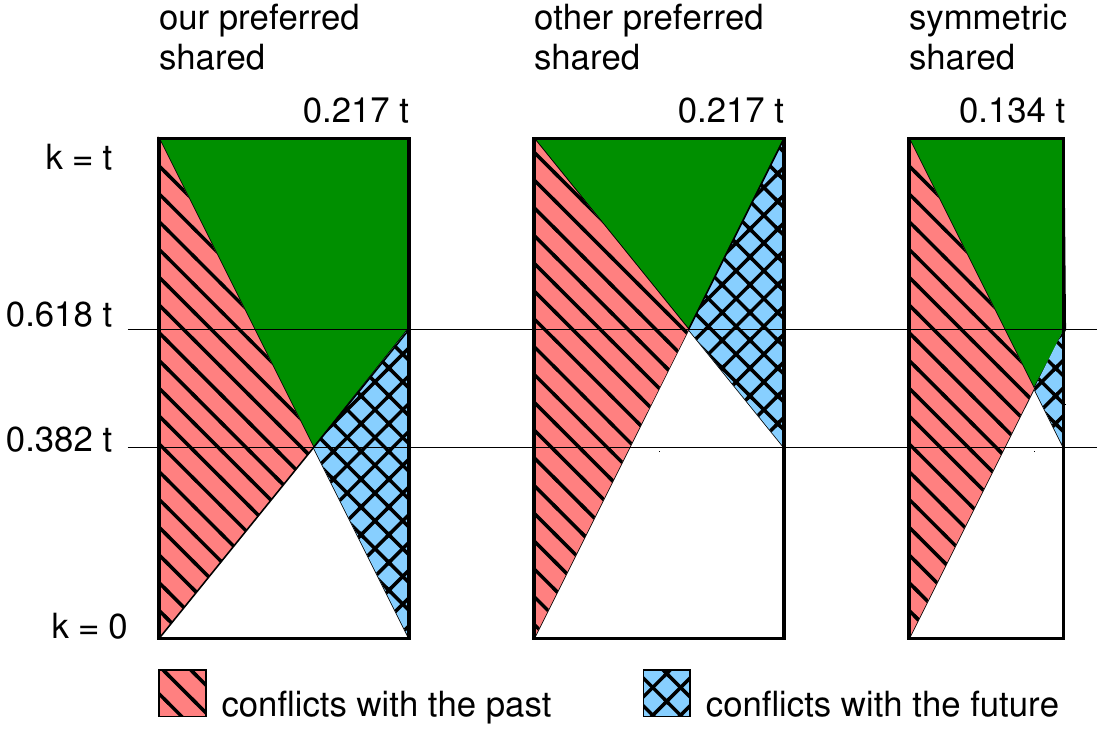}

\caption{The structure of frequency sets in the $R_0$-competitive
algorithm. Here we show only the shared frequencies, represented
similarly as in Figure~\ref{fig: shared frequencies 1.5}. In addition,
we show by different shading which of the unused frequencies would
create conflits with the past and with the future; the bottom unshaded
part would cause both types of conflicts.}
\label{fig: shared frequencies 1.43}
\end{center}
\end{figure}

A geometric interpretation of the used sets of shared frequencies is
is shown in Figure~\ref{fig: shared frequencies 1.43}.  The algorithm
with ratio $1.5$ used the shaded region shown in Figure~\ref{fig:
shared frequencies 1.5}, to avoid collisions with the past, that is
with frequencies already assigned to sets $F^c_{\tau,\tau}$ for $\tau
< t$. As observed earlier, the line $x=t-k$ is not tight; it can be
lowered to $x=(t-k)/2$ without creating conflicts. With this
modification, only some of the shared frequencies above this line are
needed. However, this modification is not sufficient to reduce the
ratio below $1.5$, because of symmetry: we still will have conflicts
for $F^{c'}_{t,k}$ for $k = t/2$.  To avoid such conflicts, we also
consider, preemptively, ``conflicts in the future", namely with sets
$F^c_{t',k'}$ for $t' > k$. These conflicts are represented in the
figure by the half-plane below the line $x=\gamma k$, for an
appropriate $\gamma$ while now the conflicts with the past are
represented by the half-plane below the line $\gamma'(t-k)$. The
optimization of the parameters for all three types of shared
frequencies leads to our new algorithm.

\medskip

The pools of $c$-shared and symmetric-shared frequencies are denoted
$S^c$ and $Q$, respectively. As before, for any real $x\ge 0$ we define
\begin{eqnarray*}
	S^c_x &=& \textrm{the first $\floor{x}$ frequencies in $S^c$, for $c\in\braced{A,B}$}.
\\
	Q_x &=& \textrm{the first $\floor{x}$ frequencies in $Q$}.
\end{eqnarray*}
Our construction uses three constants, defined as
\begin{eqnarray*}
\alpha&=&R_0-1=\frac{7-\sqrt{5}}{11}=\frac{2}{\phi+3}\approx 0.433,
\\
\beta&=&\alpha/2=\frac{7-\sqrt{5}}{22}=\frac{1}{\phi+3}\approx 0.217,
\mbox{ and}
\\
\rho&=&\beta/\phi=\frac{2\sqrt{5}-3}{11}=\frac{\phi-1}{\phi+3}\approx 0.134,
\end{eqnarray*}
where $\phi=(\sqrt5+1)/2$ is the golden ratio.
A useful fact is the identity $2\alpha+2\beta+\rho=R_0$.

We define $\calF = \bigbrace{ F^c_{t,k} }$, where for 
any $t\geq k\geq 0$ we let
\begin{eqnarray}
F^c_{t,k} &=& P^c_{\alpha t+4} 
			\cup  (S^c_{\beta\cdot\min(t,\phi k)} \setminus S^c_{\beta(t-k)})
			\cup (S^{c'}_{\beta k} \setminus S^{c'}_{\phi\beta(t-k)})
			\cup  (Q_{\rho\cdot\min(t,\phi k)} \setminus Q_{\phi\rho(t-k)}).
			\label{eqn: our F-system}
\end{eqnarray}

We now show that $\calF$ is an $R_0$-competitive F-system. To this end, 
we show that $\calF$ satisfies properties (F1), (F2) and
(\ref{eqn: def rho-competitive}).

We start with (\ref{eqn: def rho-competitive}).
For $\kappa\le \tau \le t$ and $c\in\braced{A,B}$ we have
\begin{eqnarray*}
F^c_{\tau,\kappa} &\subseteq& P^c_{\alpha\tau+4} \cup S^c_{\beta \tau}
\cup S^{c'}_{\beta \kappa} \cup Q_{\rho\tau}
		\;\subseteq\; P^c_{\alpha t+4} \cup S^c_{\beta t} \cup
                S^{c'}_{\beta t}\cup Q_{\rho t}
		\;\subseteq\; P^A_{\alpha t+4} \cup P^B_{\alpha t+4}
						\cup S^A_{\beta t} \cup S^B_{\beta t}\cup Q_{\rho t}.
\end{eqnarray*}
This last set has cardinality at most
$(2\alpha+2\beta+\rho)t+8 = R_0t+8$,  
so (\ref{eqn: def rho-competitive}) holds with $\lambda = 8$.

\smallskip

Next, we show (F2). By symmetry, we can assume that $t'\le t$ in (F2),
so $k' \le t-k$. Then
\[
F^{B}_{t',k'} 
	\;\subseteq\; P^{B} \cup S^{B}_{\phi\beta k'}\cup S^A_{\beta k'}\cup Q_{\phi\rho k'}
	\;\subseteq\; P^{B} \cup S^{B}_{\phi\beta (t-k)}\cup S^A_{\beta(t-k)}\cup Q_{\phi\rho(t-k)},
\]
and this set is disjoint with $F^A_{t,k}$ by definition~(\ref{eqn: our F-system}). 
Thus $F^A_{t,k}\cap F^{B}_{t',k'} = \emptyset$, as needed.

\smallskip

Finally, we prove (F1), namely that $|F^c_{t,k}|\ge k$. We
distinguish two cases.
\begin{description}

\item{{\mycase{1}}} $k>t/\phi$. 
This implies that $\min(t, \phi k) = t$, so in (\ref{eqn: our F-system}) we have
$S^c_{\beta\cdot\min(t,\phi k)}=S^c_{\beta t}$ and $Q_{\rho\cdot\min(t,\phi k)}=Q_{\rho t}$.
Thus
\begin{eqnarray*}
|F^c_{t,k}| &\geq &	[ \alpha t+3 ] + [ \beta t- \beta(t-k)-1 ] + 
	[ \beta k-\phi\beta(t-k)-1 ] + [\rho t -\phi\rho (t-k)-1 ] 
			\\ 
&=& (\alpha-\phi\beta-(\phi-1)\rho)t+ (2\beta+\phi\beta+\phi\rho)k
\;=\; k,
\end{eqnarray*}
using the substitutions $\alpha=2\beta$ and $\rho=\beta/\phi$.
Note that this case is asymptotically
tight as the algorithm uses all
three types of shared frequencies (and the corresponding terms are
non-negative). 

\item{{\mycase{2}}} $k\leq t/\phi$. 
The case condition implies that
$\phi k \le t$, so $S^c_{\beta\cdot\min(t,\phi k)}=S^c_{\phi\beta k}$,
$Q_{\rho\cdot\min(t,\phi k)}=Q_{\phi\rho k}$, and 
$S^{c'}_{\beta k} \setminus S^{c'}_{\phi\beta(t-k)} = \emptyset$. Therefore
\begin{eqnarray*}
|F^c_{t,k}| &\geq& [ \alpha t+ 3 ] + [ \phi\beta k- \beta (t-k)-1 ]
           + [\phi\rho k -\phi\rho (t-k)-1 ] 
\\
		&=&  (\alpha - \beta-\phi\rho) t + ((\phi+1)\beta+2\phi\rho) k+1
		\\
		&=& k+1.
\end{eqnarray*}
using $\alpha=2\beta$ and $\rho=\beta/\phi$ again. Note that this case is
(asymptotically) tight only for $k>t/2$ when $c$-shared and
symmetric-shared frequencies are used. For $k\leq t/2$, no
symmetric-shared frequencies are used and the corresponding term is
negative. 
\end{description}

Summarizing, we conclude that $\calF$ is indeed an $R_0$-competitive
F-system. Therefore, using Lemma~\ref{lem: incr alg equiv f-system}, we get
our upper bound:

%%%%%%%%%%%%%%%

\begin{theorem}
There is an $R_0$-competitive incremental
algorithm for frequency allocation on bipartite graphs, where
$R_0=(18-\sqrt{5})/11 \approx 1.433$.
\end{theorem}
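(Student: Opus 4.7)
The plan is to invoke Lemma~\ref{lem: incr alg equiv f-system} and reduce the problem to exhibiting an $R_0$-competitive F-system. Since the $1.5$-competitive construction loses a factor of $2$ on shared frequencies because the conflict threshold $x = t-k$ is too generous, I would refine the idea by (i) narrowing the shared-frequency band using a multiplier $\beta < 1/2$, and (ii) breaking symmetry by splitting shared frequencies into $A$-biased, $B$-biased, and symmetric pools. This lets different partitions use different portions of the shared universe when $k \leq t/2$, which is exactly what is needed to beat $1.5$. The construction is the family $\calF = \{F^c_{t,k}\}$ in~(\ref{eqn: our F-system}), built from disjoint pools $P^A, P^B, S^A, S^B, Q$ of private, partition-biased shared, and fully symmetric shared frequencies, scaled by carefully chosen constants $\alpha, \beta, \rho$.

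The first step is to identify the constants. I would set $\alpha = 2\beta$ and $\rho = \beta/\phi$ (linking the private pool to the biased pool and the symmetric pool to the golden ratio), so that the total asymptotic usage for fixed $t$ is $(2\alpha + 2\beta + \rho)t + O(1)$. Solving $2\alpha + 2\beta + \rho = R_0 = (18-\sqrt{5})/11$ under these constraints yields the stated values $\alpha = (7-\sqrt{5})/11$, $\beta = \alpha/2$, $\rho = (2\sqrt{5}-3)/11$. Verifying~(\ref{eqn: def rho-competitive}) is then immediate: any $F^c_{\tau,\kappa}$ with $\kappa \leq \tau \leq t$ is contained in $P^A_{\alpha t+4} \cup P^B_{\alpha t+4} \cup S^A_{\beta t} \cup S^B_{\beta t} \cup Q_{\rho t}$, whose cardinality is at most $R_0 t + 8$.

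Next I would verify condition (F2). Assume without loss of generality that $t' \leq t$, so that $k' \leq t - k$ by the hypothesis of (F2). Monotonicity of the construction in the second argument lets me upper-bound $F^B_{t',k'}$ by its ``largest'' version obtained by substituting $t-k$ for $k'$: it is contained in $P^B \cup S^B_{\phi\beta(t-k)} \cup S^A_{\beta(t-k)} \cup Q_{\phi\rho(t-k)}$. Each term is precisely the segment of the corresponding pool that $F^A_{t,k}$ \emph{excludes} via its set differences in~(\ref{eqn: our F-system}), so the intersection is empty. This is the ``collisions with the past'' intuition made rigorous.

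The main step, and the place where the parameters matter most delicately, is verifying (F1): $|F^c_{t,k}| \geq k$. The critical threshold is $k = t/\phi$, which I treat in two cases. For $k > t/\phi$, we have $\phi k > t$, so $\min(t, \phi k) = t$ and all three shared-frequency contributions are active; summing the lower bounds on the four components and substituting $\alpha = 2\beta$, $\rho = \beta/\phi$, the terms telescope exactly to $k$. For $k \leq t/\phi$, the $S^{c'}$-contribution vanishes (``no borrowing from the future'') and $\min(t,\phi k) = \phi k$; the remaining terms sum to $k + 1$. The hard part is ensuring that both cases yield $k$ simultaneously with the same ratio, which is precisely what the golden-ratio choice of $\rho/\beta$ and the $\alpha = 2\beta$ coupling secure. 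Once (F1), (F2), and~(\ref{eqn: def rho-competitive}) are established, Lemma~\ref{lem: incr alg equiv f-system} yields the theorem.
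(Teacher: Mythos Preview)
Your proposal is correct and follows essentially the same approach as the paper: you invoke Lemma~\ref{lem: incr alg equiv f-system}, use the construction~(\ref{eqn: our F-system}) with the same constants $\alpha=2\beta$, $\rho=\beta/\phi$, and verify (\ref{eqn: def rho-competitive}), (F2), and (F1) by the same arguments, including the identical two-case split at $k=t/\phi$.
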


%%%%%%%%%%%%%%%%%%%%%%%%%%%%%%%%%%%%%%%%%%%%%%%%%%%%%%%%%%%%%%%%%%%%
%%%%%%%%%%%%%%%%%%%%%%%%%%%%%%%%%%%%%%%%%%%%%%%%%%%%%%%%%%%%%%%%%%%%
%%%%%%%%%%%%%%%%%%%%%%%%%%%%%%%%%%%%%%%%%%%%%%%%%%%%%%%%%%%%%%%%%%%%

\section{A Lower Bound}
\label{sec: a lower bound}

In this section we show that if $R < 10/7$, then there is no $R$-competitive 
incremental algorithm for frequency allocation in bipartite graphs.
By Lemma~\ref{lem: incr alg equiv f-system}, it is sufficient to
show that there is no $R$-competitive F-system. 

The general intuition behind the proof is that we try to reason about
the sets $Z_t=F^A_{t,\gamma t}\cup F^B_{t,\gamma t}$ for a suitable
constant $\gamma$.  These sets should correspond to the
symmetric-shared frequencies from our algorithm, for $\gamma$ such
that no $c'$-shared frequencies are used. If $Z_t$ is too small, then
both partitions use mostly different frequencies and this yields a
lower bound on the competitive ratio. If $Z_t$ is too large, then for
a larger $t$ and suitable case, the frequencies cannot be used for
either partition, and hopefully this allows to improve the lower
bound. We are not able to do exactly this. Instead, for a variant of
$Z_t$, we show a recurrence essentially saying that if the set is too
large, then for some larger $t$, it must be proportionally even
larger, leading to a contradiction.

We now proceed with the proof.  For $c\in\braced{A,B}$, let $F^c_t =
\bigcup_{\kappa\le\tau\le t} F^c_{\tau,\kappa}$. Towards
contradiction, suppose that an F-system $\calF$ is $R$-competitive for
some $R < 10/7$. Then $\calF$ satisfies the definition of
competitiveness (\ref{eqn: def rho-competitive}) for some positive
integer $\lambda$. Choose a sufficiently large integer $\theta$ for
which $R < 10/7 - 1/\theta$.

We first identify shared frequencies in $\calF$. Recall that 
$F^c_t = \bigcup_{\kappa\le\tau\le t} F^c_{\tau,\kappa}$,  for $c\in\braced{A,B}$.
Thus the definition of $R$-competitiveness says that
$|F^A_t\cup F^B_t| \le Rt+\lambda$. The set of \emph{level-$t$ shared} 
frequencies is defined as $S_t = F^A_t\cap F^B_t$.

%%%%%%%%%%%

\begin{lemma}\label{lem: shared lower bound}
For any $t$, we have $|S_t|\geq (2-R)t-\lambda$.
\end{lemma}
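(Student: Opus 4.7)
The plan is to apply inclusion--exclusion to the specific pair $F^A_{t,t}$ and $F^B_{t,t}$, using property (F1) for size lower bounds and the competitiveness property (\ref{eqn: def rho-competitive}) for an upper bound on the union.

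First, I would note that by property (F1), we have $|F^A_{t,t}| \geq t$ and $|F^B_{t,t}| \geq t$. These two sets are contained in $F^A_t$ and $F^B_t$ respectively, so their union is a subset of $F^A_t \cup F^B_t$, and by $R$-competitiveness,
\begin{equation*}
|F^A_{t,t} \cup F^B_{t,t}| \;\le\; |F^A_t \cup F^B_t| \;\le\; Rt + \lambda.
\end{equation*}

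Next, applying inclusion--exclusion,
\begin{equation*}
|F^A_{t,t} \cap F^B_{t,t}| \;\ge\; |F^A_{t,t}| + |F^B_{t,t}| - |F^A_{t,t} \cup F^B_{t,t}| \;\ge\; 2t - Rt - \lambda \;=\; (2-R)t - \lambda.
\end{equation*}

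Finally, since $F^A_{t,t} \cap F^B_{t,t} \subseteq F^A_t \cap F^B_t = S_t$, the inequality $|S_t| \ge (2-R)t - \lambda$ follows immediately. There is no genuine obstacle here; the lemma is essentially a direct bookkeeping consequence of (F1) together with the competitiveness assumption, packaging the ``overlap'' intuition that will be exploited in the subsequent recurrence argument.
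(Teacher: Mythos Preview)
Your proof is correct and follows essentially the same approach as the paper: inclusion--exclusion combined with (F1) and the competitiveness bound. The only cosmetic difference is that the paper applies inclusion--exclusion directly to $F^A_t$ and $F^B_t$ (using $|F^c_t|\ge t$, which itself relies on $F^c_{t,t}\subseteq F^c_t$), whereas you work with the specific sets $F^A_{t,t}$ and $F^B_{t,t}$ and then pass to $S_t$ via the subset relation.
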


\begin{proof}
This is quite straightforward.  By (F1) we have
$|F^c_t|\ge t$ for each $c$, so
$|S_t| = |F^A_t|+|F^B_t| - |F^A_t\cup F^B_t|
	\geq 2t-(Rt+\lambda) 
	= (2-R)t-\lambda$.
\end{proof}

Now, let 
$S_{2t,t} = S_{2t}\cap (F^A_{2t,t}\cup F^B_{2t,t})$ be the level-$2t$ shared frequencies
that are used in $F^A_{2t,t}$ or $F^B_{2t,t}$.
Each such frequency can only be in one of these sets because 
$F^A_{2t,t} \cap F^B_{2t,t} = \emptyset$.

%%%%%%%%%%%

\begin{lemma}\label{lem: S_{2t,t} lower bound}
For any $t$, we have $|S_{2t,t}| \ge (6-4R)t - 2\lambda$.
\end{lemma}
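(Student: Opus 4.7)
The plan is to apply inclusion–exclusion between two natural sets and use the two quantitative bounds already available: the size lower bound from property (F1) for the sets $F^c_{2t,t}$, and the size upper bound on $F^A_{2t}\cup F^B_{2t}$ from $R$-competitiveness, together with Lemma~\ref{lem: shared lower bound} applied at level $2t$.

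Concretely, let $U=F^A_{2t,t}\cup F^B_{2t,t}$ and $V=F^A_{2t}\cup F^B_{2t}$. I would first observe that $F^A_{2t,t}$ and $F^B_{2t,t}$ are disjoint: this is just (F2) with $k=k'=t$ and $\tau=\tau'=2t$, since $k+k'=2t=\max(\tau,\tau')$. Combined with (F1), which gives $|F^A_{2t,t}|\ge t$ and $|F^B_{2t,t}|\ge t$, this yields $|U|\ge 2t$. Next, applying $R$-competitiveness (\ref{eqn: def rho-competitive}) at parameter $2t$ gives $|V|\le 2Rt+\lambda$, and Lemma~\ref{lem: shared lower bound} applied at $2t$ gives $|S_{2t}|\ge(2-R)(2t)-\lambda=(4-2R)t-\lambda$.

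Now, since $S_{2t}\subseteq V$ and $U\subseteq V$, we have $S_{2t}\cup U\subseteq V$, so inclusion–exclusion yields
\begin{eqnarray*}
|S_{2t,t}| \;=\; |S_{2t}\cap U|
&=& |S_{2t}|+|U|-|S_{2t}\cup U| \\
&\ge& |S_{2t}|+|U|-|V| \\
&\ge& \bigl((4-2R)t-\lambda\bigr)+2t-(2Rt+\lambda) \\
&=& (6-4R)t-2\lambda,
\end{eqnarray*}
which is the claimed bound.

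I do not expect any real obstacle here. The only things worth double-checking are that the three ingredients use the correct scale of $t$ (all three are applied at level $2t$, which explains the factor of $2$ in the $(6-4R)t$ term and the doubled $\lambda$), and that $U$ really does sit inside $V$ so that $|S_{2t}\cup U|\le |V|$ is legitimate; both are immediate from the definitions.
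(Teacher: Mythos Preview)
Your proof is correct and follows essentially the same approach as the paper: both arguments apply inclusion--exclusion to $S_{2t}$ and $F^A_{2t,t}\cup F^B_{2t,t}$, use (F2) to get disjointness of $F^A_{2t,t}$ and $F^B_{2t,t}$, bound $|S_{2t}\cup U|$ by $|F^A_{2t}\cup F^B_{2t}|\le 2Rt+\lambda$ via competitiveness, and plug in (F1) together with Lemma~\ref{lem: shared lower bound} at level~$2t$. The only difference is presentational.
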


\begin{proof}
Observe that $F^A_{2t,t} \cup F^B_{2t,t} \cup S_{2t} \subseteq F^A_{2t}\cup F^B_{2t}$
by definition, and thus~\eqref{eqn: def rho-competitive} implies
\begin{align*}
2Rt+\lambda 	&\geq |F^A_{2t,t}\cup F^B_{2t,t}\cup S_{2t}|\\
		&= |F^A_{2t,t}\cup F^B_{2t,t}| + |S_{2t}| - |(F^A_{2t,t}\cup F^B_{2t,t}) \cap S_{2t}|\\
		&= |F^A_{2t,t}| + |F^B_{2t,t}| + |S_{2t}| - |S_{2t,t}| \enspace,
\end{align*}
where the identities follow from the inclusion-exclusion principle, disjointness of
$F^A_{2t,t}$ and $F^B_{2t,t}$, and the definition of $S_{2t,t}$. 

Transforming this inequality, we get 
\begin{align*}
|S_{2t,t}| 	&\geq |F^A_{2t,t}| + |F^B_{2t,t}| + |S_{2t}| -(2Rt+\lambda)\\
		&\geq (6-4R)t-2\lambda \enspace,
\end{align*}
as claimed, by property~(F1) and Lemma~\ref{lem: shared lower bound}.
\end{proof}

For any even $t$ define $Z_{3t/2,t} = F^A_{3t/2,t} \cap F^B_{3t/2,t}$.
In the rest of the lower-bound proof we will set up a recurrence
relation for
the cardinality of sets $S_{t}\cup Z_{3t/2,t}$. The next step
is the following lemma.

%%%%%%%%%%%%

\begin{lemma}\label{lem: subsets of S_2t - Z_2t}
For any even $t$, we have
$|S_{2t} \setminus Z_{3t,2t}| \ge |S_t\cup Z_{3t/2,t}| + |S_{2t,t}|$.
\end{lemma}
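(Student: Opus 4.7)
The plan is to exhibit $(S_t\cup Z_{3t/2,t})\cup S_{2t,t}$ as a disjoint union sitting inside $S_{2t}\setminus Z_{3t,2t}$. That will immediately give the desired inequality, since the two pieces are disjoint from each other.

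First I would record the easy containments: since $F^c_\tau\subseteq F^c_{2t}$ for $\tau\le 2t$, we get $S_t\subseteq S_{2t}$; and since $F^c_{3t/2,t}\subseteq F^c_{3t/2}\subseteq F^c_{2t}$, we get $Z_{3t/2,t}\subseteq S_{2t}$; and $S_{2t,t}\subseteq S_{2t}$ by definition. So everything lives in $S_{2t}$, and the task reduces to showing (a) $(S_t\cup Z_{3t/2,t})\cap Z_{3t,2t}=\emptyset$, (b) $S_{2t,t}\cap Z_{3t,2t}=\emptyset$, and (c) $(S_t\cup Z_{3t/2,t})\cap S_{2t,t}=\emptyset$.

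All three disjointness statements will follow from clean applications of (F2), reading it contrapositively: if $f\in F^A_{t_1,k_1}\cap F^B_{t_2,k_2}$ then $k_1+k_2>\max(t_1,t_2)$. For (a), any $f\in S_t$ lies in some $F^B_{\tau,\kappa}$ with $\kappa\le\tau\le t$, and if also $f\in F^A_{3t,2t}$ then (F2) forces $2t+\kappa>3t$, i.e.\ $\kappa>t$, contradiction; so $f\notin F^A_{3t,2t}$ and hence $f\notin Z_{3t,2t}$. The same argument for $Z_{3t/2,t}$: an $f\in Z_{3t/2,t}$ sits in $F^B_{3t/2,t}$, and pairing with $F^A_{3t,2t}$ gives $2t+t=3t\le\max(3t,3t/2)$, so (F2) applies and yields disjointness. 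For (c), any $f\in S_t\cup Z_{3t/2,t}$ belongs to some $F^B_{\tau,\kappa}$ with $\kappa\le t$, and (F2) against $F^A_{2t,t}$ gives disjointness (since $t+\kappa\le 2t$); symmetrically $f\notin F^B_{2t,t}$, so $f\notin S_{2t,t}$.

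The one point where I need to be a little more careful is (b), because membership in $S_{2t,t}$ only guarantees a piece of information on one side. Without loss of generality suppose $f\in S_{2t,t}\cap F^A_{2t,t}$ (the other case is symmetric). Being in $S_{2t}$, $f$ also lies in some $F^B_{\tau_2,\kappa_2}$ with $\kappa_2\le\tau_2\le 2t$; but $f\in F^A_{2t,t}\cap F^B_{\tau_2,\kappa_2}$ together with the contrapositive of (F2) forces $t+\kappa_2>\max(2t,\tau_2)=2t$, i.e.\ $\kappa_2>t$. Now if additionally $f\in F^A_{3t,2t}$, the contrapositive of (F2) applied to $F^A_{3t,2t}\cap F^B_{\tau_2,\kappa_2}$ would demand $2t+\kappa_2>\max(3t,\tau_2)=3t$, i.e.\ $\kappa_2>t$ — which is consistent, so this by itself does not close the case. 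Instead, I would use (F2) in its direct form: since $\kappa_2\le\tau_2\le 2t$ and $2t+\tau_2\ge\max(3t,\tau_2)$ fails exactly when... hmm, let me re-pivot: apply (F2) to $F^A_{3t,2t}$ and $F^B_{2t,t}$ directly, noting $2t+t=3t\le\max(3t,2t)$, which gives $F^A_{3t,2t}\cap F^B_{2t,t}=\emptyset$. Since $f\in F^B_{3t,2t}$ would mean $f$ could be in $F^B_{2t,t}$—but that's not what we want; rather, to conclude $f\notin Z_{3t,2t}$ we show $f\notin F^A_{3t,2t}$ using the $B$-witness $\tau_2\le 2t$, $\kappa_2>t$ and (F2) which gives $2t+\kappa_2\le\max(3t,\tau_2)=3t$ precisely when $\kappa_2\le t$; so the contrapositive forces disjointness whenever $\kappa_2\le t$, and we ensure this by picking the $B$-witness to be any $F^B_{\tau_2,\kappa_2}$ containing $f$ with $\kappa_2\le t$ — which exists because $f\in F^B_{2t}$ means $f$ lies in \emph{some} $F^B_{\tau_2,\kappa_2}$, and if every such witness had $\kappa_2>t$ we would in particular have $f\notin F^B_{2t,t'}$ for any $t'\le t$, yet $f\in F^B_{\tau_2,\kappa_2}\subseteq F^B_{\tau_2}$ gives such a witness at level $\tau_2$. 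I expect this last bookkeeping to be the main obstacle and would finalize it by carefully choosing the smallest-$\kappa_2$ witness to force the contradiction, then assembling steps (a)–(c) into the final inequality
\[
|S_{2t}\setminus Z_{3t,2t}|\ge |(S_t\cup Z_{3t/2,t})\sqcup S_{2t,t}|=|S_t\cup Z_{3t/2,t}|+|S_{2t,t}|.
\]
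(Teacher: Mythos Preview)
Your overall plan matches the paper's proof exactly: show that $S_t\cup Z_{3t/2,t}$ and $S_{2t,t}$ are disjoint subsets of $S_{2t}\setminus Z_{3t,2t}$. Your arguments for the containments in $S_{2t}$ and for parts (a) and (c) are correct.

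The gap is in part (b), and it stems from a simple oversight: $Z_{3t,2t}=F^A_{3t,2t}\cap F^B_{3t,2t}$ is an \emph{intersection}, so to conclude $f\notin Z_{3t,2t}$ it suffices to show $f$ misses \emph{either} factor. When $f\in F^A_{2t,t}$, the natural target is $F^B_{3t,2t}$, not $F^A_{3t,2t}$. And indeed (F2) applied to the pair $(F^A_{2t,t},F^B_{3t,2t})$ gives $t+2t\le\max(2t,3t)=3t$, hence $F^A_{2t,t}\cap F^B_{3t,2t}=\emptyset$, so $f\notin F^B_{3t,2t}$ and thus $f\notin Z_{3t,2t}$. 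You actually wrote down the symmetric instance $F^A_{3t,2t}\cap F^B_{2t,t}=\emptyset$, which is precisely what handles the case $f\in F^B_{2t,t}$; you just paired it with the wrong case and then talked yourself out of it.

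Your fallback attempt---to prove $f\notin F^A_{3t,2t}$ by finding a $B$-witness $F^B_{\tau_2,\kappa_2}$ with $\kappa_2\le t$---cannot be completed: membership in $F^B_{2t}$ only guarantees some witness with $\kappa_2\le\tau_2\le 2t$, and you yourself already deduced $\kappa_2>t$ from $f\in F^A_{2t,t}$, so no small-$\kappa_2$ witness need exist. Drop that line and use the one-line (F2) application above; then (a)--(c) assemble into the inequality exactly as you wrote.
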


\begin{figure}[t]
\begin{center}
\includegraphics[width=3.5in]{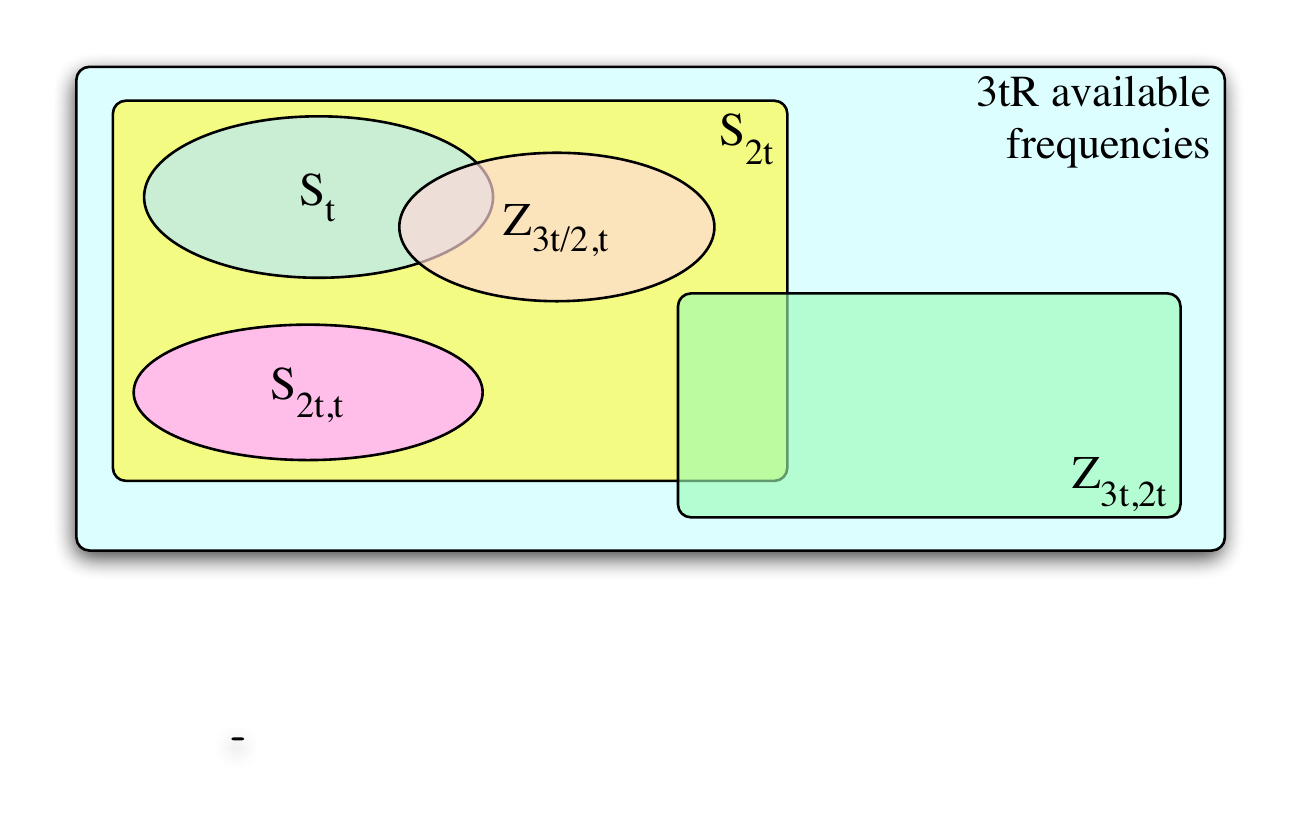}
%\vspace{-0.5in}
\caption{Illustration of Lemma~\ref{lem: subsets of S_2t - Z_2t}.}
\label{fig: subsets of S_2t - Z_2t}
\end{center}
\end{figure}

\begin{proof}
From the definition, the two sets
$S_{t}\cup Z_{3t/2,t}$ and $S_{2t,t}$ are disjoint
and they are subsets of $S_{2t} - Z_{3t,2t}$.
(See Figure~\ref{fig: subsets of S_2t - Z_2t} for illustration.)
This immediately implies the lemma. 
\end{proof}

%%%%%%%%%%%%

\begin{lemma}\label{lem: 3Rt + ... >= 4t}
For any even $t$, we have
$|Z_{3t,2t}| \ge |S_{t} \cup Z_{3t/2,t}| - (3R-4)t - \lambda$.
\end{lemma}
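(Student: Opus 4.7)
My plan is to establish the bound by combining a disjointness observation with the competitiveness inequality and property (F1). Specifically, I will show that the set $S_t \cup Z_{3t/2,t}$ is disjoint from $F^A_{3t,2t} \cup F^B_{3t,2t}$, using property (F2) applied at the ``large'' level $(3t, 2t)$. Since both of these sets sit inside $F^A_{3t} \cup F^B_{3t}$, whose size is at most $3Rt + \lambda$ by $R$-competitiveness, this will bound the union $F^A_{3t,2t} \cup F^B_{3t,2t}$ from above. Then inclusion-exclusion together with the lower bounds $|F^A_{3t,2t}|,|F^B_{3t,2t}| \ge 2t$ from (F1) will deliver the claim after rearrangement.

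For the disjointness of $S_t$: if $f \in S_t = F^A_t \cap F^B_t$, then $f$ lies in some $F^B_{\tau',\kappa'}$ with $\tau' \le t$ and $\kappa' \le \tau' \le t$. Assume for contradiction $f \in F^A_{3t,2t}$; applying (F2) with $k = 2t$, $t = 3t$, $k' = \kappa'$, $t' = \tau'$ gives $k+k' = 2t + \kappa' \le 3t = \max(t,t')$, so $F^A_{3t,2t} \cap F^B_{\tau',\kappa'} = \emptyset$, a contradiction; by symmetry $f \notin F^B_{3t,2t}$. For $Z_{3t/2,t}$: if $f \in F^A_{3t/2,t} \cap F^B_{3t/2,t}$ and also $f \in F^A_{3t,2t}$, then (F2) with $(t,k)=(3t,2t)$, $(t',k')=(3t/2,t)$ yields $k+k' = 3t \le 3t = \max(t,t')$, again a contradiction; symmetry handles $F^B_{3t,2t}$.

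Once disjointness is in hand, the final calculation is essentially forced. From competitiveness at level $3t$,
\[
|S_t \cup Z_{3t/2,t}| + |F^A_{3t,2t}\cup F^B_{3t,2t}| \;\le\; |F^A_{3t}\cup F^B_{3t}| \;\le\; 3Rt+\lambda,
\]
so $|F^A_{3t,2t}\cup F^B_{3t,2t}| \le 3Rt+\lambda - |S_t \cup Z_{3t/2,t}|$. Applying inclusion-exclusion and (F1),
\[
|Z_{3t,2t}| = |F^A_{3t,2t}|+|F^B_{3t,2t}| - |F^A_{3t,2t}\cup F^B_{3t,2t}| \;\ge\; 4t - \bigl(3Rt+\lambda - |S_t \cup Z_{3t/2,t}|\bigr),
\]
which rearranges to exactly $|Z_{3t,2t}| \ge |S_t \cup Z_{3t/2,t}| - (3R-4)t - \lambda$.

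The only subtlety — and therefore the one step to execute carefully — is Step 2 for $Z_{3t/2,t}$: here the parameters $(3t,2t)$ and $(3t/2,t)$ saturate the (F2) condition $k+k' = \max(t,t')$ with equality, so it is tight by design. The $S_t$ case is looser because $\tau' \le t$ leaves room to spare. No real obstacle otherwise; the lemma is essentially a careful bookkeeping argument built on the two defining properties of an F-system.
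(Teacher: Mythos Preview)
Your proof is correct and follows essentially the same approach as the paper: both arguments observe that $F^A_{3t,2t}\cup F^B_{3t,2t}$ is disjoint from $S_t\cup Z_{3t/2,t}$, bound their union inside $F^A_{3t}\cup F^B_{3t}$ via $R$-competitiveness, and finish with inclusion--exclusion and (F1). You spell out the disjointness verification via (F2) in more detail than the paper, which simply asserts it; your treatment of the tight case $k+k'=3t$ for $Z_{3t/2,t}$ is exactly right.
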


\begin{proof}
As $F^A_{3t,2t}$, $F^B_{3t,2t}$, $S_t$ and $Z_{3t/2,t}$ are all subsets of $F^A_{3t}\cup F^B_{3t}$,
inequality~\eqref{eqn: def rho-competitive} implies
\begin{align*}
3Rt+\lambda 	&\geq |F^A_{3t,2t}\cup F^B_{3t,2t}\cup S_t\cup Z_{3t/2,t}|\\
		&= |F^A_{3t,2t}\cup F^B_{3t,2t}| + |S_t \cup Z_{3t/2,t}|\\
		&= |F^A_{3t,2t}| + |F^B_{3t,2t}| - |F^A_{3t,2t}\cap F^B_{3t,2t}|+ |S_t \cup Z_{3t/2,t}|\\
		&= |F^A_{3t,2t}| + |F^B_{3t,2t}| - |Z_{3t,2t}|+ |S_t \cup Z_{3t/2,t}| \enspace,
\end{align*}
where the identities follow from the inclusion-exclusion principle, the fact that
$F^A_{3t,2t}\cup F^B_{3t,2t}$ and $S_t\cup Z_{3t/2,t}$ are disjoint, and the definition
of $Z_{3t,2t}$.

Transforming this inequality, we get
\begin{align*}
|Z_{3t,2t}| 	&\geq |F^A_{3t,2t}| + |F^B_{3t,2t}| + |S_t\cup Z_{3t/2,t}| -(3Rt+\lambda)\\
		&\geq |S_t\cup Z_{3t/2,t}| - (3R-4)t -\lambda \enspace,
\end{align*}
as claimed, by property~(F1).
\end{proof}

We are now ready to derive our recurrence. By adding the inequalities in
Lemma~\ref{lem: subsets of S_2t - Z_2t} and Lemma~\ref{lem: 3Rt + ... >= 4t},
taking into account that $|S_{2t} \setminus Z_{3t,2t}|$ + $|Z_{3t,2t}|
= |S_{2t} \cup Z_{3t,2t}|$,
and then applying Lemma~\ref{lem: S_{2t,t} lower bound},
for any even $t$ we get
\begin{eqnarray}
|S_{2t} \cup Z_{3t,2t}| 
	&\ge& 2\cdot |S_{t} \cup Z_{3t/2,t}| + |S_{2t,t}| - (3R-4)t - \lambda
		\nonumber
		\\
	&\ge&  2\cdot |S_{t} \cup Z_{3t/2,t}| + (10-7R)t - 3\lambda.
		\label{eqn: lb main recurrence}
\end{eqnarray}

For $i = 0,1,\ldots,\theta$, define $t_i = 6\theta\lambda 2^i$ and
$\gamma_i = |S_{t_i} \cup Z_{3t_i/2,t_i}|/t_i$. (Note that each
$t_i$ is even.)
Since $S_{t_i} \cup Z_{3t_i/2,t_i} \subseteq S_{2t_i}$, we have
that $\gamma_i \le |S_{2t_i}|/t_i \le 2R +1/(6\theta) < 3$.
Dividing recurrence (\ref{eqn: lb main recurrence}) by $t_{i+1} = 2t_i$, 
we obtain, for $i = 0,1,\ldots,\theta-1$,
\begin{eqnarray*}
\gamma_{i+1} &\ge& \gamma_i + 5 - 7R/2 - 3\lambda/(2t_i) 
	\\
	&\ge& \gamma_i + 7/(2\theta) - 1/(4\theta)
	\;\ge\; \gamma_i + 3/\theta.
\end{eqnarray*}
But then we have $\gamma_\theta \ge \gamma_0 + 3 \ge 3$,
which contradicts our earlier bound $\gamma_i < 3$, completing
the proof. Thus we have proved the following.

%%%%%%%%%%%%%%%%%

\begin{theorem}\label{thm: 10/7 lower bound}
If $\calA$ is an $R$-competitive incremental algorithm for frequency
allocation on bipartite graphs, then $R\geq 10/7\approx 1.428$.
\end{theorem}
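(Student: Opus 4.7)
The plan is to invoke Lemma~\ref{lem: incr alg equiv f-system} to reduce the statement to showing that no F-system has competitive ratio $R < 10/7$. So I would assume for contradiction that some F-system $\calF$ is $R$-competitive with additive slack $\lambda$, and fix an integer $\theta$ so large that $R < 10/7 - 1/\theta$; the proof will iterate a dyadic recurrence $\theta$ times over scales $t_i = 6\theta\lambda \cdot 2^i$, using $\theta$ to absorb the additive constant.

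The first move is to isolate shared frequencies. Setting $F^c_t = \bigcup_{\kappa \le \tau \le t} F^c_{\tau,\kappa}$ and $S_t = F^A_t \cap F^B_t$, inclusion--exclusion together with (F1) gives $|S_t| \ge (2-R)t - \lambda$: any saving below the trivial $2t$ count must come from overlap. I would refine this by restricting attention to the diagonal pieces: define $S_{2t,t} = S_{2t} \cap (F^A_{2t,t} \cup F^B_{2t,t})$ and note that $F^A_{2t,t}$ and $F^B_{2t,t}$ are disjoint by (F2). Applying inclusion--exclusion to $F^A_{2t,t} \cup F^B_{2t,t} \cup S_{2t} \subseteq F^A_{2t} \cup F^B_{2t}$ and plugging in the previous bound yields $|S_{2t,t}| \ge (6-4R)t - 2\lambda$, a bound that becomes substantive precisely because $R < 3/2$.

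The heart of the argument is to introduce a ``large-intersection'' witness $Z_{3t/2,t} = F^A_{3t/2,t} \cap F^B_{3t/2,t}$, which is well-defined because the pair $(3t/2,t)$ does \emph{not} satisfy the hypothesis of (F2). A short check with (F2) shows that $S_t \cup Z_{3t/2,t}$ and $S_{2t,t}$ are disjoint subsets of $S_{2t} \setminus Z_{3t,2t}$. Dually, the inclusion $F^A_{3t,2t} \cup F^B_{3t,2t} \cup (S_t \cup Z_{3t/2,t}) \subseteq F^A_{3t} \cup F^B_{3t}$ combined with (F1) and the defining identity $|F^A_{3t,2t} \cap F^B_{3t,2t}| = |Z_{3t,2t}|$ delivers $|Z_{3t,2t}| \ge |S_t \cup Z_{3t/2,t}| - (3R-4)t - \lambda$. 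Adding the two inequalities and applying the bound on $|S_{2t,t}|$ yields the crucial doubling recurrence
\[
|S_{2t} \cup Z_{3t,2t}| \;\ge\; 2\,|S_t \cup Z_{3t/2,t}| + (10-7R)t - 3\lambda.
\]

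To close, I would iterate on $\gamma_i = |S_{t_i} \cup Z_{3t_i/2,t_i}|/t_i$. The trivial competitiveness bound forces $\gamma_i \le |S_{2t_i}|/t_i < 3$. Dividing the recurrence by $t_{i+1} = 2t_i$ gives $\gamma_{i+1} \ge \gamma_i + (5 - 7R/2) - 3\lambda/(2t_i) \ge \gamma_i + 3/\theta$ once $t_i$ is large, so after $\theta$ iterations $\gamma_\theta \ge \gamma_0 + 3 \ge 3$, contradicting the cap. I expect the main obstacle to be the design of the witness sets: one has to choose the scales $(t,3t/2)$ at step $i$ so that they nest inside $(2t,3t)$ at step $i+1$ simultaneously compatibly with (F2) disjointness on three different pairs of intervals. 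It is exactly this choice that makes the coefficient $10-7R$ emerge after the addition, pinning the threshold to $10/7$; any other splitting would either violate (F2) somewhere or dilute the coefficient and weaken the bound.
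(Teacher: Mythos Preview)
Your proposal is correct and follows the paper's proof essentially step for step: the same reduction via Lemma~\ref{lem: incr alg equiv f-system}, the same auxiliary sets $S_t$, $S_{2t,t}$, $Z_{3t/2,t}$, the same four intermediate inequalities combined into the identical doubling recurrence, and the same iteration over $t_i = 6\theta\lambda\cdot 2^i$ to force $\gamma_\theta \ge 3$. Even the constants and the final arithmetic match the paper exactly.
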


As a final remark we observe that our lower bound works even if the
additive constant $\lambda$ is allowed to depend on the actual
graph. I.e., for every $R<10/7$ we can construct a single finite graph
$G$ so that no algorithm is $R$-competitive on this graph.  In
 our lower bound, we can restrict our attention to sets
$F^c_{t_i,t_i}$, $F^c_{2t_i,t_i}$, $F^c_{3t_i,2t_i}$, and $F^c_{3t_i/2,t_i}$,
for $i = 0,1,\ldots,\theta$ and $c=A,B$.  Then, in the construction from
the proof of Lemma~\ref{lem: incr alg equiv f-system} for
the lower bound sequence we obtain a finite graph together with a
request sequence. However, for a fixed $\theta$, the
graphs for different values of $\lambda$ are isomorphic, as all the
indices scale linearly with $\lambda$. So, instead of using different
isomorphic graphs, we can use different sequences corresponding to
different values of $\lambda$ on a single graph $G$.

%%%%%%%%%%%%%%%%%%%%%%%%%%%%%%%%%%%%%%%%%%%%%%%%%%%%%%%%%%%%%%%%%%%%%%
%%%%%%%%%%%%%%%%%%%%%%%%%%%%%%%%%%%%%%%%%%%%%%%%%%%%%%%%%%%%%%%%%%%%%%
%%%%%%%%%%%%%%%%%%%%%%%%%%%%%%%%%%%%%%%%%%%%%%%%%%%%%%%%%%%%%%%%%%%%%%

\section{Final Comments}

We proved that the competitive ratio for incremental frequency
allocation on bipartite graphs is between $1.428$ and $1.433$,
improving the previous bounds of $1.33$ and $1.5$. Closing the
remaining gap, small as it is, remains an intriguing open problem. 
Besides completing the analysis of this special case, the
solution is likely to involve sophisticated techniques that may be of 
its own interest.

The two other obvious directions of study are to prove better bounds
for the dynamic case and for $k$-partite graphs. The general idea
of distinguishing ``collisions with the past'' and 
``collisions with the future'', that we use to define our 
frequency sets, should be useful to derive upper bounds for these
problems. Our concept of
F-systems can be extended in a natural way to $k$-partite graphs,
but with a caveat: for $k\ge 3$ the maximum load on a $k$-clique
is only a lower bound on the optimum (unlike for $k=2$, where the
equality holds).
Therefore in Lemma~\ref{lem: incr alg equiv f-system} only one
direction holds. This lemma is still sufficient though to establish
upper bounds on the competitive ratio. It is also conceivable that
a lower bound can be proved using graphs where the optimum
number of frequencies is equal to the maximum load of a $k$-clique.

\bibliographystyle{abbrv}
\bibliography{online,other}

\end{document}